\documentclass[journal,onecolumn]{IEEEtran}
\IEEEoverridecommandlockouts

\usepackage[T1]{fontenc}
\usepackage[utf8]{inputenc}

\usepackage{cite}
\usepackage{amsmath,amssymb,amsfonts,amsthm,bbm,mathtools}

\usepackage{xcolor}
\usepackage{cleveref}
\usepackage{graphicx}
\usepackage{tikz}
\usetikzlibrary{spy}
\usetikzlibrary{positioning,calc}
\usetikzlibrary{decorations.pathmorphing,calc,shapes,shapes.geometric,patterns}
\usetikzlibrary{shapes.multipart}
\usetikzlibrary{arrows,positioning,calc,intersections}
\usetikzlibrary{datavisualization.formats.functions}
\usetikzlibrary{arrows, decorations.markings}
\usetikzlibrary{arrows.meta}

\newtheorem{theorem}{Theorem}
\newtheorem*{theorem*}{Theorem}

\newtheorem{definition}{Definition}

\newtheorem{corollary}[theorem]{Corollary}
\theoremstyle{remark}
\newtheorem{remark}{Remark}
\newtheorem*{remark*}{Remark}

\begin{document}

\IEEEoverridecommandlockouts
 \IEEEpubid{\makebox[\columnwidth]{\hfill} \hspace{\columnsep}\makebox[\columnwidth]{ }}

\title{Secure Deterministic Identification with Feedback and Sensing over State-Dependent Wiretap Channels\\
}
\author{
\IEEEauthorblockN{
\hspace{0.4cm}Vida Gholamian%
    \IEEEauthorrefmark{1}%
    \IEEEauthorrefmark{5}%
    ,
Yaning Zhao%
    \IEEEauthorrefmark{1}%
    \IEEEauthorrefmark{5}%
    ,
Christian Deppe%
    \IEEEauthorrefmark{1}%
    \IEEEauthorrefmark{5}%
    ,
Holger Boche%
    \IEEEauthorrefmark{2}%
    \IEEEauthorrefmark{4}%
    \IEEEauthorrefmark{5}%
    \IEEEauthorrefmark{6}%
    \IEEEauthorrefmark{7}%
    ,
Rafael F. Schaefer%
    \IEEEauthorrefmark{3}%
    \IEEEauthorrefmark{4}%
    \IEEEauthorrefmark{5}%
}

\vspace{0.5em}

\IEEEauthorblockA{\IEEEauthorrefmark{1}
Technische Universität Braunschweig, Institute for Communications Technology, Brunswick, Germany}

\IEEEauthorblockA{\IEEEauthorrefmark{2}
Technische Universität Munich, Chair of Theoretical Information Technology, Munich, Germany}

\IEEEauthorblockA{\IEEEauthorrefmark{3}
Technische Universität Dresden, Chair of Information Theory and Machine Learning, Dresden, Germany}

\IEEEauthorblockA{\IEEEauthorrefmark{4}
Cluster of Excellence “Centre for Tactile Internet with Human-in-the-Loop” (CeTI) at Technische Universität Dresden}

\IEEEauthorblockA{\IEEEauthorrefmark{5}
BMFTR Transfer Hub 6G-life}

\IEEEauthorblockA{\IEEEauthorrefmark{6}
Munich Center for Quantum Science and Technology (MCQST)}

\IEEEauthorblockA{\IEEEauthorrefmark{7}
Munich Quantum Valley (MQV)}

\IEEEauthorblockA{
Email: \{v.gholamian-sefiddarboni, yaning.zhao, christian.deppe\}@tu-bs.de,
boche@tum.de, rafael.schaefer@tu-dresden.de
}

\vspace{-0.8cm}
}

\maketitle

\thispagestyle{empty}
\pagestyle{empty}
\pagenumbering{gobble}

\begin{abstract}
Identification differs fundamentally from classical
message transmission: rather than reconstructing
a transmitted message, the receiver only determines
whether a particular identity was sent. This allows
the number of identifiable messages to grow doubly
exponentially with the blocklength, provided that
sufficient randomness is available. In deterministic
identification, however, the encoder has no access
to local randomization, making the generation of
common randomness a central challenge.

In this paper, we investigate secure deterministic
identification over state-dependent discrete
memoryless wiretap channels with private noiseless
feedback and a sensing distortion constraint.
Our framework combines three requirements:
reliable identification at the legitimate receiver,
confidentiality of the transmitted identity against
an eavesdropper, and estimation of the channel
state at the transmitter. The encoder is fully
deterministic and may depend only on the identity
and previously received feedback. We impose a
symbol-wise input restriction that guarantees
the prescribed sensing distortion constraint.

We establish conditions under which secure
deterministic identification achieves the same
capacity as non-secure identification with feedback.
Specifically, if the admissible main channel has
positive ordinary transmission capacity and the
conditional entropy of the legitimate channel
output given the eavesdropper's observation and
a fixed admissible input is positive, secrecy
does not reduce the identification capacity.
Importantly, this result does not require
positive classical wiretap secrecy capacity.

The main achievability construction exploits
noiseless feedback to generate common randomness
for identification and to extract a secret key
through privacy amplification. The key protects
a short identification tag using one-time-pad
encryption, while reliable transmission of the
encrypted tag requires only a sublinear number
of additional channel uses. We also develop an
alternative construction in which
feedback-generated randomness replaces the
local randomization of a stochastic wiretap
encoder.

Furthermore, we establish that the secure
identification capacity is zero whenever the
non-secure identification capacity vanishes.
We also identify a class of channels for which
the full non-secure identification capacity
is achievable even when the private feedback
entropy is zero, and illustrate this result
with a binary state-dependent wiretap channel.

Our results demonstrate that feedback can
provide the randomness required for both
identification and secrecy without local
encoder randomization, while simultaneously
satisfying sensing constraints.
\end{abstract}

\begin{IEEEkeywords}
Identification, wiretap channels, state-dependent channels, noiseless feedback, joint communication and sensing.
\end{IEEEkeywords}

\pagestyle{plain}

\section{Introduction}

    Identification (ID) via discrete memoryless channels (DMCs), introduced by Ahlswede and Dueck~\cite{AhlswedeDueck1989}, provides a fundamentally different communication paradigm compared to classical message transmission over DMCs~\cite{6773024}. Instead of requiring the receiver to reconstruct the transmitted message, the receiver performs a binary hypothesis test for each candidate identity. When local randomness is available at the encoder, the number of identifiable messages can grow double-exponentially with the blocklength. Hence, the ID rate is measured as $\frac{1}{n}\log\log N$, rather than the classical transmission rate $\frac{1}{n}\log N$. This setting is commonly referred to as randomized ID. In contrast, without local encoder randomness, deterministic ID supports only exponentially many identifiable messages, and its rate behaves similarly to the classical transmission rate~\cite{salariseddigh2021deterministic}.
    
    The central feature of ID is that randomization is so important that feedback can enable double exponential rate even in the absence of local randomness~\cite{wiese2022identification,labidi2024joint, zhao2025identification}, by making the randomness of the stochastic
    channel output available to both the transmitter and the
    legitimate receiver.
    This common randomness determines the ID capacity and forms the basis of achievability schemes~\cite{ahlswedeCai1999,verdu2008information,shemuel2022feedback}. This suggests that the randomness required not only for ID but also for secrecy need not be produced at the encoder itself, but can instead be extracted from the communication process. The main insight of this work is that feedback-induced randomness is sufficient to achieve full ID performance, making encoder randomization unnecessary even in the presence of a wiretapper.
    
    This observation is particularly relevant for practical resource-constrained systems, such as IoT sensors, implantable medical devices, backscatter nodes, and molecular or nanoscale communication systems, where generating reliable local randomness may be costly, infeasible, or insecure. Insufficient entropy can lead to weak cryptographic keys and serious security vulnerabilities~\cite{heninger2012mining}, while randomness sources are often limited by hardware and environmental constraints~\cite{capkun2025security}. In extreme molecular or nanoscale scenarios, transmitters may even lack mechanisms for controlled randomization~\cite{nakano2013molecular,sabu2020analysis}.
    This limitation makes the considered framework particularly relevant to molecular communication, where the combination of post-Shannon communication and security has been identified as an interesting open problem and a promising direction for future research~\cite{haselmayr2019integration}.
    The scenario considered is motivated by integrated sensing and communication (ISAC), where sensing and data transmission share the same physical resources~\cite{fettweis20216g}. In such systems, nodes must observe a dynamically changing environment while simultaneously exchanging information, which calls for a joint design rather than a separation-based approach. Existing ISAC studies focus mainly on waveform design and signal processing aspects~\cite{sturm2011waveform,bliss2014cooperative,bica2015opportunistic,huang2015radar}. 
    
    In an ISAC scenario, an ID message represents an
   event or operational label, such as the occurrence of an
   alarm, a device status, or a monitored operating condition,
    rather than a payload that must be fully reconstructed.
    The receiver therefore tests whether a particular event of
    interest has occurred while the same channel observations are
    used for state sensing.
    
    Although information-theoretic studies have characterized sensing and communication-tradeoffs for conventional message transmission~\cite{2025exploring,ahmadipour2022information,chang2023rate}, secure sensing-constrained ID with feedback in
    the absence of local encoder randomness remains unexplored.
    The inclusion of a wiretapper is motivated by the broadcast nature of wireless systems. In particular, identifying transmitted messages may reveal sensitive information about the system state or operational behavior. This issue is especially critical in sensing-integrated systems, where communication signals are closely coupled to the environment. Incorporating a wiretapper therefore captures fundamental security constraints and enables the characterization of reliable and confidential ID under sensing constraints.
    
    To address this gap, we adopt a state-dependent channel model with sensing constraints and focus on secure ID, which is well suited for control, monitoring, and alarm-type communication. 
    We further incorporate noiseless feedback and security constraints in the presence of this wiretapper. 
    In particular, this work builds on~\cite{EW26} which considers secure
    ID under sensing constraints with randomized
    encoding.

The remainder of the paper is organized as
follows. Section II introduces the system
model, sensing criterion, and main capacity
result. Section III establishes achievability
using feedback-generated common randomness
and one-time-pad protection of the
identification tag. Section IV presents an
alternative achievability construction based
on stochastic wiretap encoding. Section V
concludes the paper.


\section{System Models and Main Results} \label{sec:Preli}
    Let $(\mathcal{X} \times \mathcal{S},P_S, W_S(y|x,s), V_S(z|x,s), \mathcal{Y}, \mathcal{Z})$ denote a state-dependent discrete memoryless wiretap channel, where $\mathcal{X}$ is a finite input alphabet, $\mathcal{Y}$ and $\mathcal{Z}$ are finite output alphabets corresponding to the legitimate receiver and the wiretapper, respectively, and $\mathcal{S}$ is a finite state alphabet. The channel is characterized by the conditional probability mass functions $W_S(y|x,s)$ on $\mathcal{Y}$ and $V_S(z|x,s)$ on $\mathcal{Z}$ and we denote the state-averaged channels as $W = P_S W_S$ and $V = P_S V_S$.
    Additionally, for an input distribution $P_X$ we denote the output distributions as $P_Y = P_X W = P_X P_S W_S$ and $P_Z = P_X V = P_X P_S V_S$. 
    Let $P_{YZ|X,S}$ denote the joint channel law
    with marginals $W_S$ and $V_S$. The corresponding
    state-averaged joint channel is
    \begin{equation}
    \label{eq:joint_channel}
    P_{YZ|X}(y,z|x)
    =\sum_{s\in\mathcal S}P_S(s)
    P_{YZ|X,S}(y,z|x,s).
    \end{equation}

Throughout the paper, all logarithms are
to base two unless explicitly stated
otherwise. Accordingly, all entropies
and information rates are measured in bits.
The notation $\exp(x)$ denotes $e^x$.

    The wiretap channel is jointly memoryless.
For any fixed input sequence
$x^n\in\mathcal X^n$ and state sequence
$s^n\in\mathcal S^n$, its joint
transition law satisfies
\[
P_{YZ|X,S}^{(n)}(y^n,z^n|x^n,s^n)
=
\prod_{t=1}^n
P_{YZ|X,S}(y_t,z_t|x_t,s_t).
\]
In particular, the corresponding
marginal transition laws satisfy
\[
\begin{aligned}
W_S^n(y^n|x^n,s^n)
&=\prod_{t=1}^n W_S(y_t|x_t,s_t),\\
V_S^n(z^n|x^n,s^n)
&=\prod_{t=1}^n V_S(z_t|x_t,s_t).
\end{aligned}
\]
  The state sequence $(S_t)_{t=1}^n$ is i.i.d.\
according to $P_S$. The state sequence is
not provided as additional side information
to the transmitter, the legitimate receiver,
or the wiretapper. However, the terminals
may infer information about the states
from their respective channel observations.
At each time $t$, the current state $S_t$
is independent of the identification message
and all past channel inputs and outputs.

    \begin{figure}
        \centering
        \scalebox{1}{

\tikzset{every picture/.style={line width=0.75pt}} 

\begin{tikzpicture}[x=0.75pt,y=0.75pt,yscale=-1,xscale=1]

\draw   (82,61.7) .. controls (82,58) and (85,55) .. (88.7,55) -- (152.3,55) .. controls (156,55) and (159,58) .. (159,61.7) -- (159,81.8) .. controls (159,85.5) and (156,88.5) .. (152.3,88.5) -- (88.7,88.5) .. controls (85,88.5) and (82,85.5) .. (82,81.8) -- cycle ;
\draw   (82.4,114.9) .. controls (82.4,111.2) and (85.4,108.2) .. (89.1,108.2) -- (152.7,108.2) .. controls (156.4,108.2) and (159.4,111.2) .. (159.4,114.9) -- (159.4,135) .. controls (159.4,138.7) and (156.4,141.7) .. (152.7,141.7) -- (89.1,141.7) .. controls (85.4,141.7) and (82.4,138.7) .. (82.4,135) -- cycle ;
\draw   (261,114.9) .. controls (261,111.2) and (264,108.2) .. (267.7,108.2) -- (315.9,108.2) .. controls (319.6,108.2) and (322.6,111.2) .. (322.6,114.9) -- (322.6,135) .. controls (322.6,138.7) and (319.6,141.7) .. (315.9,141.7) -- (267.7,141.7) .. controls (264,141.7) and (261,138.7) .. (261,135) -- cycle ;
\draw   (398.5,115.9) .. controls (398.5,112.37) and (401.37,109.5) .. (404.9,109.5) -- (465.5,109.5) .. controls (469.03,109.5) and (471.9,112.37) .. (471.9,115.9) -- (471.9,135.1) .. controls (471.9,138.63) and (469.03,141.5) .. (465.5,141.5) -- (404.9,141.5) .. controls (401.37,141.5) and (398.5,138.63) .. (398.5,135.1) -- cycle ;
\draw    (159,125.75) -- (259.4,125.41) ;
\draw [shift={(261.4,125.4)}, rotate = 179.8] [color={rgb, 255:red, 0; green, 0; blue, 0 }  ][line width=0.75]    (10.93,-3.29) .. controls (6.95,-1.4) and (3.31,-0.3) .. (0,0) .. controls (3.31,0.3) and (6.95,1.4) .. (10.93,3.29)   ;
\draw    (322.4,124.8) -- (397,124.99) ;
\draw [shift={(399,125)}, rotate = 180.15] [color={rgb, 255:red, 0; green, 0; blue, 0 }  ][line width=0.75]    (10.93,-3.29) .. controls (6.95,-1.4) and (3.31,-0.3) .. (0,0) .. controls (3.31,0.3) and (6.95,1.4) .. (10.93,3.29)   ;
\draw    (357.6,125) -- (357.4,33.8) ;
\draw    (357.4,33.8) -- (122.4,34.6) ;
\draw  [dash pattern={on 0.84pt off 2.51pt}] (78.2,46.08) .. controls (78.2,36.65) and (85.85,29) .. (95.28,29) -- (146.52,29) .. controls (155.95,29) and (163.6,36.65) .. (163.6,46.08) -- (163.6,138.02) .. controls (163.6,147.45) and (155.95,155.1) .. (146.52,155.1) -- (95.28,155.1) .. controls (85.85,155.1) and (78.2,147.45) .. (78.2,138.02) -- cycle ;
\draw    (122.4,34.6) -- (121.68,52.6) ;
\draw [shift={(121.6,54.6)}, rotate = 272.29] [color={rgb, 255:red, 0; green, 0; blue, 0 }  ][line width=0.75]    (10.93,-3.29) .. controls (6.95,-1.4) and (3.31,-0.3) .. (0,0) .. controls (3.31,0.3) and (6.95,1.4) .. (10.93,3.29)   ;
\draw   (266.8,174.44) .. controls (266.8,171.21) and (269.41,168.6) .. (272.64,168.6) -- (296.56,168.6) .. controls (299.79,168.6) and (302.4,171.21) .. (302.4,174.44) -- (302.4,191.96) .. controls (302.4,195.19) and (299.79,197.8) .. (296.56,197.8) -- (272.64,197.8) .. controls (269.41,197.8) and (266.8,195.19) .. (266.8,191.96) -- cycle ;
\draw    (284.6,168.4) -- (284.82,142) ;
\draw [shift={(284.83,140)}, rotate = 90.47] [color={rgb, 255:red, 0; green, 0; blue, 0 }  ][line width=0.75]    (10.93,-3.29) .. controls (6.95,-1.4) and (3.31,-0.3) .. (0,0) .. controls (3.31,0.3) and (6.95,1.4) .. (10.93,3.29)   ;
\draw    (472.3,125.8) -- (515.9,125.8) ;
\draw [shift={(517.9,125.8)}, rotate = 180] [color={rgb, 255:red, 0; green, 0; blue, 0 }  ][line width=0.75]    (10.93,-3.29) .. controls (6.95,-1.4) and (3.31,-0.3) .. (0,0) .. controls (3.31,0.3) and (6.95,1.4) .. (10.93,3.29)   ;
\draw    (197.5,74.25) -- (159,74.25) ;
\draw [shift={(157,74.25)}, rotate = 360] [color={rgb, 255:red, 0; green, 0; blue, 0 }  ][line width=0.75]    (10.93,-3.29) .. controls (6.95,-1.4) and (3.31,-0.3) .. (0,0) .. controls (3.31,0.3) and (6.95,1.4) .. (10.93,3.29)   ;
\draw    (197.5,74.25) -- (198.1,125.3) ;
\draw  [dash pattern={on 4.5pt off 4.5pt}]  (117.5,94.25) -- (117.5,107.25) ;
\draw [shift={(117.5,109.25)}, rotate = 270] [color={rgb, 255:red, 0; green, 0; blue, 0 }  ][line width=0.75]    (10.93,-3.29) .. controls (6.95,-1.4) and (3.31,-0.3) .. (0,0) .. controls (3.31,0.3) and (6.95,1.4) .. (10.93,3.29)   ;
\draw  [dash pattern={on 4.5pt off 4.5pt}]  (117.5,94.25) -- (208.5,94.25) ;
\draw  [dash pattern={on 4.5pt off 4.5pt}]  (208,34.5) -- (208.5,94.25) ;
\draw    (82.5,73.25) -- (44,73.25) ;
\draw [shift={(42,73.25)}, rotate = 360] [color={rgb, 255:red, 0; green, 0; blue, 0 }  ][line width=0.75]    (10.93,-3.29) .. controls (6.95,-1.4) and (3.31,-0.3) .. (0,0) .. controls (3.31,0.3) and (6.95,1.4) .. (10.93,3.29)   ;
\draw    (40.4,126.4) -- (78.6,126.4) ;
\draw [shift={(80.6,126.4)}, rotate = 180] [color={rgb, 255:red, 0; green, 0; blue, 0 }  ][line width=0.75]    (10.93,-3.29) .. controls (6.95,-1.4) and (3.31,-0.3) .. (0,0) .. controls (3.31,0.3) and (6.95,1.4) .. (10.93,3.29)   ;
\draw    (472.9,178.24) -- (516.5,178.24) ;
\draw [shift={(518.5,178.24)}, rotate = 180] [color={rgb, 255:red, 0; green, 0; blue, 0 }  ][line width=0.75]    (10.93,-3.29) .. controls (6.95,-1.4) and (3.31,-0.3) .. (0,0) .. controls (3.31,0.3) and (6.95,1.4) .. (10.93,3.29)   ;
\draw    (310.5,179.25) -- (310.9,141.8) ;
\draw   (399.5,168.9) .. controls (399.5,165.37) and (402.37,162.5) .. (405.9,162.5) -- (466.5,162.5) .. controls (470.03,162.5) and (472.9,165.37) .. (472.9,168.9) -- (472.9,188.1) .. controls (472.9,191.63) and (470.03,194.5) .. (466.5,194.5) -- (405.9,194.5) .. controls (402.37,194.5) and (399.5,191.63) .. (399.5,188.1) -- cycle ;
\draw    (310.5,179.25) -- (396.5,179.25) ;
\draw [shift={(398.5,179.25)}, rotate = 180] [color={rgb, 255:red, 0; green, 0; blue, 0 }  ][line width=0.75]    (10.93,-3.29) .. controls (6.95,-1.4) and (3.31,-0.3) .. (0,0) .. controls (3.31,0.3) and (6.95,1.4) .. (10.93,3.29)   ;
\draw  [dash pattern={on 0.84pt off 2.51pt}] (392.5,109.92) .. controls (392.5,104.18) and (397.15,99.54) .. (402.89,99.54) -- (467.51,99.54) .. controls (473.25,99.54) and (477.9,104.18) .. (477.9,109.92) -- (477.9,141.08) .. controls (477.9,146.82) and (473.25,151.47) .. (467.51,151.47) -- (402.89,151.47) .. controls (397.15,151.47) and (392.5,146.82) .. (392.5,141.08) -- cycle ;

\draw (94,67) node [anchor=north west][inner sep=0.75pt]  [font=\footnotesize] [align=left] {Estimator};
\draw (99,120) node [anchor=north west][inner sep=0.75pt]  [font=\footnotesize] [align=left] {Encoder};
\draw (276,120) node [anchor=north west][inner sep=0.75pt]  [font=\footnotesize] [align=left] {DMC};
\draw (413,120) node [anchor=north west][inner sep=0.75pt]  [font=\footnotesize] [align=left] {Decoder};
\draw (274.9,178) node [anchor=north west][inner sep=0.75pt]  [font=\footnotesize]  {$Ps$};
\draw (168,127) node [anchor=north west][inner sep=0.75pt]  [font=\footnotesize]  {$X_{t} =f_{i}^{t}\left( Y^{t-1}\right)$};
\draw (264.56,150) node [anchor=north west][inner sep=0.75pt]  [font=\footnotesize]  {$S^{n}$};
\draw (18,62) node [anchor=north west][inner sep=0.75pt]  [font=\footnotesize]  {$\hat{S}^{n-1}$};
\draw (348.56,128) node [anchor=north west][inner sep=0.75pt]  [font=\footnotesize]  {$Y_{t}$};
\draw (97,12) node [anchor=north west][inner sep=0.75pt]  [font=\footnotesize] [align=left] {Sender};
\draw (520,113) node [anchor=north west][inner sep=0.75pt]  [font=\footnotesize] [align=left] {Is $\displaystyle i`$ sent?\\Yes? or No?};
\draw (8,120) node [anchor=north west][inner sep=0.75pt]  [font=\footnotesize] [align=left] {$\displaystyle i\in \mathcal{N}$};
\draw (406,173) node [anchor=north west][inner sep=0.75pt]  [font=\footnotesize] [align=left] {Wiretapper};
\draw (522,172) node [anchor=north west][inner sep=0.75pt]  [font=\footnotesize] [align=left] {$\displaystyle Z^{n}$};
\draw (410.5,83) node [anchor=north west][inner sep=0.75pt]  [font=\footnotesize] [align=left] {Receiver};

\end{tikzpicture}

}
        \caption{State-dependent channel with noiseless feedback in the presence of a wiretapper.}
        \label{fig:System}
    \end{figure}

    The system can be interpreted as consisting of a sender and a receiver. The sender comprises an encoder and a state estimator. For each identity $i \in \{1,\ldots,N\}$, the encoder generates channel inputs according to $x_t = f_t^i(y^{t-1})$, where $y^{t-1} \in \mathcal{Y}^{t-1}$ denotes the delayed noiseless feedback output. The noiseless feedback link from the legitimate receiver to the transmitter is assumed to be private; hence, the wiretapper does not observe the feedback sequence \(Y^{t-1}\). 
     The estimator produces a state estimate sequence $\hat{s}^n \in \mathcal{S}^n$ based on the channel input and the observed feedback.
   The legitimate receiver observes $y^n$ and performs a hypothesis test to decide whether a given identity was transmitted.
  The state estimator is located at the
transmitter, which has access to its own
channel inputs and, through noiseless
feedback, to the corresponding channel
outputs. The distortion constraint
therefore measures the quality of state
estimation at the transmitter.
    
    In addition to reliable ID at the legitimate receiver and the sensing distortion constraint, we impose a secrecy requirement with respect to the wiretapper’s observation. In the deterministic ID setting, security is formulated in terms of the wiretapper’s inability to distinguish between different ID messages.
    More precisely, for any pair of distinct messages, the corresponding induced output distributions at the wiretapper must be asymptotically indistinguishable. 
This pairwise indistinguishability criterion
provides an ID-oriented total-variation
secrecy requirement. It ensures that the
wiretapper's distinguishing advantage
between any two identification messages
vanishes asymptotically, uniformly over
all pairs of identities. This criterion
is closely related to total-variation
formulations of semantic security for
wiretap channels~\cite{bellare2012semantic,yang2019wiretap}.

  A secure deterministic ID feedback code for the state-dependent wiretap channel with feedback and sensing is defined as follows.

\begin{definition}
\label{def:1}
An $(n,N,\lambda)$ secure deterministic ID feedback
code with $\lambda<1$ for $(W_S,V_S)$ consists
of pairs
\[
\left\{(f_i,\mathcal D_i):i=1,\ldots,N\right\},
\]
where $f_i=(f_i^1,\ldots,f_i^n)$ denotes the
deterministic feedback encoding strategy for
identity $i$, with
\[
f_i^1\in\mathcal X_D,
\qquad
f_i^t:\mathcal Y^{t-1}\to\mathcal X_D,
\quad t=2,\ldots,n,
\]
and $\mathcal D_i\subseteq\mathcal Y^n$ is the
decoding region associated with identity $i$.

For each identity $i$, let $P_i$ denote the
joint output distribution induced by the
deterministic feedback encoding strategy and
the state-averaged joint channel:
\begin{equation}
P_i(y^n,z^n)
=
\prod_{t=1}^{n}
P_{YZ|X}
\left(
y_t,z_t\mid f_i^t(y^{t-1})
\right).
\end{equation}
Here, $f_i^1(y^0)$ is understood as the
constant input $f_i^1$.
Let $P_{Y^n|i}$ and $P_{Z^n|i}$ denote
the corresponding marginal distributions.

The code satisfies the following reliability
and secrecy conditions for all
$i\in\{1,\ldots,N\}$ and all $j\neq i$:
\begin{align}
    \label{eq:def_det_first_kind}
    P_{Y^n|i}\bigl(\mathcal D_i^c\bigr)
    &\leq \lambda,
    \\
    \label{eq:def_det_second_kind}
    P_{Y^n|i}\bigl(\mathcal D_j\bigr)
    &\leq \lambda,
    \\
    \label{eq:def_det_security}
    \left\|
        P_{Z^n|i}-P_{Z^n|j}
    \right\|_{\mathrm{TV}}
    &\leq \lambda.
\end{align}
\end{definition}

    \begin{remark}
 If the channel state is known to the legitimate
receiver, the decoding regions may additionally
depend on the state sequence, i.e.,
$\mathcal{D}_i(s^n)$.
Such additional information cannot decrease
the achievable secure identification capacity.
However, the capacity characterization in
this paper concerns the model in which the
channel state is unknown to the receiver.
In general, equality with the capacity of
the state-informed receiver model cannot
be asserted without additional assumptions.
    \end{remark}
       
    The availability of noiseless feedback plays a dual role in the
    considered model. On the one hand, it enables the generation of
    common randomness between the transmitter and the legitimate
    receiver, which is essential for ID without requiring local encoder
    randomization. On the other hand, it provides the observations
    required for state estimation. Thus, feedback supports both the
    ID and sensing tasks.
    
  For state estimation, we employ the optimal
symbol-wise estimator with respect to the
distortion function $d$ defined below.
For each $(x,y)\in\mathcal X\times\mathcal Y$,
let
\begin{equation}
h^*(x,y)\in
\operatorname*{arg\,min}_{\hat s\in\mathcal S}
\sum_{s\in\mathcal S}
P_S(s)W_S(y|x,s)d(s,\hat s).
\end{equation}
The state estimate at time $t$ is then
given by
\begin{equation}
\hat S_t=h^*(X_t,Y_t),
\quad t=1,\ldots,n.
\end{equation}
    
    Since the channel inputs $X_t$ are generated via deterministic encoding functions that depend on past outputs, the joint distribution of $(X_t,Y_t,S_t)$ is induced by the feedback strategy and the channel law.
    
   \begin{remark}
Since the state sequence is i.i.d. and the
channel is memoryless, the posterior
distribution of the state sequence given
the input and output sequences factorizes as
\[
P_{S^n|X^n,Y^n}(s^n|x^n,y^n)
=
\prod_{t=1}^n
P_{S|X,Y}(s_t|x_t,y_t)
\]
for all $(x^n,y^n)$ with positive probability.
Consequently, under the additive distortion
measure considered here, the optimal
symbol-wise estimator $h^*$ also minimizes
the expected distortion among all estimators
based on the entire sequences $(X^n,Y^n)$.
Thus, noncausal sequence estimation cannot
improve the minimum achievable average
distortion.
\end{remark}
    
    Here, $d:\mathcal S \times \mathcal S \to [0,\infty)$ denotes the distortion function, and the expectation is taken with respect to the joint distribution induced by the deterministic encoding strategy and the channel. We further define
    \begin{align}
    d^*_S(x) &:= \mathbb{E}_{SY}\big[d(S,h^*(X,Y))\mid X=x\big],
    \end{align}
   We define the set of admissible input symbols as
\[
\mathcal{X}_D
:=
\left\{
x\in\mathcal{X}: d_S^*(x)\le D
\right\}.
\]
Throughout the paper, we assume that
$\mathcal{X}_D\neq\emptyset$, or equivalently,
\[
D\ge \min_{x\in\mathcal X}d_S^*(x).
\]
    Hence, for any \(x\in\mathcal X_D\),
    \begin{equation}
        \mathbb{E}\left[
        d(S_t,\hat S_t)\mid X_t=x
        \right]
        =d_S^*(x)\le D,
        \qquad x\in\mathcal X_D.
    \end{equation}
    The average distortion over a block of length $n$ is defined as
    \begin{equation}
    \bar d^{(n)} := \frac{1}{n}\sum_{t=1}^{n} \mathbb{E}\!\left[d(S_t,\hat S_t)\right],
    \end{equation}
   In the present model, we impose the stronger
requirement that every feedback encoding
function takes values in $\mathcal X_D$.
This symbol-wise admissibility condition
implies the average distortion constraint,
since
\[
\mathbb{E}[d(S_t,\hat S_t)]
=
\sum_{x\in\mathcal X_D}
P_{X_t}(x)d_S^*(x)
\le D
\]
for every $t=1,\ldots,n$. Consequently,
$\bar d^{(n)}\le D$.

We emphasize that this restriction is
stronger than imposing only a block-average
distortion constraint. The capacity results
derived below refer to the class of codes
whose channel inputs are restricted to
$\mathcal X_D$.

    \begin{remark}
    In ISAC and IoT applications, this constraint quantifies the
    required accuracy in estimating an environmental or operational
    state while the same channel resources are simultaneously used
    for ID.
    \end{remark}
  
    In the following, we define the secure
deterministic identification capacity for
the restricted class of feedback codes
introduced in Definition~\ref{def:1}.

More precisely, for every identity $i$,
every time instant $t$, and every possible
feedback realization $y^{t-1}$, we require
\[
f_i^t(y^{t-1})\in\mathcal X_D.
\]
This input-support restriction guarantees
that the expected distortion does not
exceed $D$ at any time instant and,
consequently, that
\[
\bar d^{(n)}\le D.
\]

However, restricting every channel input
to $\mathcal X_D$ is generally stronger
than imposing only a block-average
distortion constraint.

The capacity characterization established
below therefore applies to the restricted
class of codes defined above. It does not
necessarily characterize the capacity
under a general block-average distortion
constraint.
    
    \begin{definition}\label{Def:SID-D}
    \begin{enumerate}
        \item The rate of a secure deterministic $(n,N,\lambda)$
        ID feedback code for $(W_S,V_S)$ is defined as
        $R = \frac{\log\log N}{n}$.
    
        \item An ID rate--distortion pair $(R,D)$ for
        $(W_S,V_S)$ is said to be achievable if,
        for every $\lambda\in(0,1)$, there exists
        $n_0(\lambda)$ such that for every
        $n\geq n_0(\lambda)$, an
        $(n,N,\lambda)$ secure deterministic ID
        feedback code exists with
        $\frac{1}{n}\log\log N\geq R,$
        where every encoding function takes values
in the admissible input alphabet
$\mathcal X_D$ defined above.

    
        \item The secure deterministic ID
        capacity--distortion function
        $C_{\mathrm{SID}}^d(D,W_S,V_S)$ is defined as
\[
\sup\bigl(\{0\}\cup
\{R>0:(R,D)\text{ is achievable}\}\bigr).
\]
    \end{enumerate}
    \end{definition}
    
    The following theorem establishes bounds
    on the secure deterministic identification
    capacity with private noiseless feedback
    under the sensing distortion constraint.
    It also identifies conditions under which
    the secure capacity coincides with the
    non-secure identification capacity or
    vanishes. The result is expressed in terms
    of the non-secure identification capacity,
    the ordinary transmission capacity of the
    admissible main channel, and the private
    conditional feedback entropy.

  Let $C_{\mathrm{ID}}^d(D,W_S)$ denote the
non-secure deterministic identification
capacity with noiseless feedback under
the sensing distortion constraint $D$.

As established in~\cite{labidi2024joint},
the entropy characterization requires
positive ordinary transmission capacity.
More precisely,
\begin{equation}
C_{\mathrm{ID}}^d(D,W_S)
=
\begin{cases}
\displaystyle
\max_{x\in\mathcal X_D} H(Y|X=x),
& C^d(D,W_S)>0,\\[2mm]
0,
& C^d(D,W_S)=0,
\end{cases}
\end{equation}
where $C^d(D,W_S)$ denotes the ordinary
transmission capacity of the admissible
main channel, defined below, and
\[
P_{Y|X=x}(y)
=
\sum_{s\in\mathcal S}
P_S(s)W_S(y|x,s).
\]
    For comparison, let $C_S^d(D,W_S,V_S)$
denote the classical strong-secrecy
transmission capacity of the state-averaged
wiretap channel without feedback, allowing
stochastic encoding
~\cite{Wyner1975,csiszar1978broadcast,
csiszar2011information}.
With the input alphabet restricted to
$\mathcal X_D$, it is given by
\begin{equation}
C_S^d(D,W_S,V_S)
=
\max_{\substack{P_{U,X}:\\
P_X(\mathcal X_D)=1}}
\left[I(U;Y)-I(U;Z)\right],
\end{equation}
where the maximization is over joint
distributions satisfying
$U-X-(Y,Z)$ and
\begin{equation}
P_{U,X,Y,Z}(u,x,y,z)
=
P_{U,X}(u,x)P_{YZ|X}(y,z|x),
\end{equation}
with $P_{YZ|X}$ defined
in~\eqref{eq:joint_channel}.

We emphasize that this quantity refers
to the classical wiretap channel without
feedback. A positive value of
$C_S^d(D,W_S,V_S)$ is not required for
our secure deterministic identification
achievability result.

    Theorem~\ref{theorm:secureCapaciy} provides bounds
    on the secure deterministic ID capacity over
    state-dependent wiretap channels with private
    noiseless feedback. The upper bound follows from
    the corresponding non-secure ID capacity
    \cite{labidi2024joint}. Moreover, the secure ID
    capacity coincides with the non-secure one whenever
    the ordinary transmission capacity and the private
    conditional feedback entropy are both positive,
    without requiring positive ordinary secrecy capacity.    
    Let $C^d(D,W_S)$ denote the ordinary transmission
    capacity of the state-averaged main channel
    $W=P_SW_S$, with input alphabet restricted to
    $\mathcal{X}_D$, i.e.,
    \(
    C^d(D,W_S)
    := \max_{P_X:\,P_X(\mathcal{X}_D)=1} I(X;Y).
    \)
 Furthermore, define
\[
H_{\mathrm{priv}}(D)
:= \max_{x\in\mathcal{X}_D} H(Y\mid Z,X=x).
\]
The conditional entropy is evaluated under
the state-averaged joint channel law
$P_{YZ|X}$ defined in~\eqref{eq:joint_channel}.
In particular, $H_{\mathrm{priv}}(D)$
depends on the joint distribution of
$(Y,Z)$ given $X$, and not only on the
marginal channels $W_S$ and $V_S$.
Throughout the paper, the joint channel
law is understood to be part of the
specified wiretap channel model and is
suppressed in the notation
$C_{\mathrm{SID}}^d(D,W_S,V_S)$.

    In the following theorem, all channel inputs
    are restricted to $\mathcal{X}_D$.
\begin{theorem}\label{theorm:secureCapaciy}
    The secure deterministic ID feedback capacity
    with sensing distortion constraint $D$ satisfies
    \begin{equation}
    0\leq C_{\mathrm{SID}}^d(D,W_S,V_S)
    \leq C_{\mathrm{ID}}^d(D,W_S).
    \end{equation}
    Moreover, the following equalities hold
    under the stated conditions:
    \begin{equation*}
    C_{\mathrm{SID}}^d(D,W_S,V_S)=
    \begin{cases}
    C_{\mathrm{ID}}^d(D,W_S),
    & C^d(D,W_S)>0 \text{ and } H_{\mathrm{priv}}(D)>0,\\[2mm]
    0,
    & C^d(D,W_S)=0 \text{ or } C_{\mathrm{ID}}^d(D,W_S)=0.
    \end{cases}
    \end{equation*}
\end{theorem}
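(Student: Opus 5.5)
The proof splits into three parts: the converse, the zero-capacity case, and the achievability under $C^d(D,W_S)>0$ and $H_{\mathrm{priv}}(D)>0$.

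\textbf{Converse and degenerate case.} Dropping the secrecy condition \eqref{eq:def_det_security} from Definition~\ref{def:1} leaves exactly a non-secure deterministic ID feedback code with inputs in $\mathcal X_D$. Hence every achievable secure rate is achievable without secrecy, which gives $C_{\mathrm{SID}}^d(D,W_S,V_S)\le C_{\mathrm{ID}}^d(D,W_S)$. The bound $C_{\mathrm{SID}}^d\ge 0$ holds by definition. If $C^d(D,W_S)=0$, the characterization from~\cite{labidi2024joint} gives $C_{\mathrm{ID}}^d(D,W_S)=0$. So in both cases of the second line the upper bound forces $C_{\mathrm{SID}}^d=0$.

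\textbf{Achievability, feedback phase.} Fix $\epsilon>0$. Let $x^*\in\mathcal X_D$ attain $\max_x H(Y|X=x)$, and let $x'\in\mathcal X_D$ attain $H_{\mathrm{priv}}(D)$. The block of length $n$ is split into four phases.
\begin{enumerate}
\item Send $x^*$ for $m=n-o(n)$ uses. Both parties then share $Y^m$ i.i.d.\ with entropy rate $H(Y|X=x^*)$. Following Ahlswede--Dueck~\cite{AhlswedeDueck1989} and~\cite{labidi2024joint}, this gives common randomness $K_1$ of about $m(H(Y|x^*)-\epsilon)$ bits.
\item Send $x'$ for $k=\sqrt n$ uses to produce $Y^k$, whose conditional entropy given the eavesdropper's $Z^k$ is $kH_{\mathrm{priv}}(D)$. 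By the leftover hash lemma with a fixed deterministic extractor, the parties obtain a key $K_2$ of about $\ell=k(H_{\mathrm{priv}}-\epsilon)$ bits. The eavesdropper also sees $Z^m$, but this is fine because phases 1 and 2 use disjoint time slots and the channel is memoryless.
\item Form the ID tag $T_i=g_i(K_1)\in\{0,1\}^{\ell'}$ with $\ell'\le \ell$, where $\{g_i\}$ comes from a standard ID code built on common randomness, giving $\log\log N\approx m(H(Y|x^*)-\epsilon)$. Compute $C=T_i\oplus K_2$.
\item Transmit $C$ over the main channel with an ordinary deterministic code of rate below $C^d(D,W_S)>0$, using $O(\ell')=o(n)$ uses. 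Here positivity of the transmission capacity is used.
\end{enumerate}
The receiver decodes $\hat C$, recovers $T$, and accepts $j$ iff $g_j(K_1)=\hat T$. The type I error is bounded by the common-randomness disagreement plus the channel decoding error. The type II error is bounded by $\Pr[g_i(K_1)=g_j(K_1)]$ plus these same errors.

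\textbf{Secrecy, the main obstacle.} The encoder is deterministic given feedback, so every input depends on $i$ only through phase 4. The phase 1 and 2 inputs are fixed, and hence $Z^{m+k}$ is independent of $i$. In phase 4 the input is a deterministic function of $C$, and $Z$ depends on $i$ only through $C$. We have $\|P_{C,Z^{m+k}|i}-U\times P_{Z^{m+k}}\|_{\mathrm{TV}}\le \|P_{K_2Z^k}-U\times P_{Z^k}\|+(\text{terms})$, where we must also account for the dependence between $T_i$, which is a function of $Y^m$, and $Z^m$. That dependence is harmless, since $K_2$ is nearly uniform and independent of $(Y^m,Z^m)$ by memorylessness, and the one-time pad then makes $C$ nearly uniform independent of $(T_i,Z^{m})$. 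So $P_{Z^n|i}$ is within $2^{-\Omega(\sqrt n)}$ of an $i$-independent law, and the triangle inequality gives \eqref{eq:def_det_security}.

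The delicate point is the deterministic extractor. The leftover hash lemma needs a seed, but since the seed can be public, a good fixed hash exists by averaging. The smooth min-entropy of $Y^k$ given $Z^k$ concentrates at $kH(Y|Z,X=x')$ by the AEP. The remaining step is to check that $\ell'\ge \log(1/\lambda)+O(1)$ suffices for the ID tag error; a growing $\ell'=\sqrt n$ does this. Letting $\epsilon\to0$ yields $C_{\mathrm{SID}}^d\ge \max_{x\in\mathcal X_D}H(Y|X=x)=C_{\mathrm{ID}}^d(D,W_S)$.
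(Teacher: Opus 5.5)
Your proposal is correct and follows essentially the same route as the paper. The converse drops the secrecy constraint, with the zero case from the non-secure formula. Achievability uses message-independent common randomness from $x^\star$, a sublinear privacy-amplified key from an input with $H(Y|Z,X=x_0)>0$, a one-time pad on a sublinear coloring tag, and transmission of the encrypted tag at a positive rate below $C^d(D,W_S)$, with secrecy argued from the independence of the two preliminary phases just as in the paper.

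There is one small slip in your parameters. With $k=\sqrt n$ and $\ell\approx\sqrt n\,(H_{\mathrm{priv}}(D)-\epsilon)$, you cannot take $\ell'=\sqrt n$ when $H_{\mathrm{priv}}(D)<1+\epsilon$. Either set $\ell'=\lfloor\ell\rfloor$, or lengthen phase 2 to $\lceil c\sqrt n\rceil$ with $c(H_{\mathrm{priv}}(D)-\epsilon)>1$; the latter is what the paper does by choosing $r_n$ with $\ell_n<r_n(h_0-\epsilon)$.
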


The preceding theorem establishes
sufficient conditions for achieving
the full non-secure identification
capacity. However, the condition
$H_{\mathrm{priv}}(D)>0$ is not
necessary in general.

In particular, the regime
\[
C^d(D,W_S)>0,\qquad
C_{\mathrm{ID}}^d(D,W_S)>0,\qquad
H_{\mathrm{priv}}(D)=0
\]
is not completely characterized
by the preceding theorem.

The following corollary identifies
an additional class of wiretap
channels for which the full
non-secure identification capacity
is achievable, independently of
the value of $H_{\mathrm{priv}}(D)$.

\begin{corollary}
\label{cor:input_independent_wiretap}
Suppose that the state-averaged
wiretapper channel is independent
of the input, i.e., there exists
a probability distribution $Q_Z$
on $\mathcal Z$ such that
\[
P_{Z|X}(z|x)=Q_Z(z)
\]
for every $x\in\mathcal X_D$
and every $z\in\mathcal Z$.

Then the secure deterministic
identification capacity with
private noiseless feedback satisfies
\[
C_{\mathrm{SID}}^d(D,W_S,V_S)
=
C_{\mathrm{ID}}^d(D,W_S),
\]
regardless of the value of
$H_{\mathrm{priv}}(D)$.
\end{corollary}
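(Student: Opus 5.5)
The plan is to show that, under the hypothesis of Corollary~\ref{cor:input_independent_wiretap}, the secrecy condition~\eqref{eq:def_det_security} holds automatically, with $\|P_{Z^n|i}-P_{Z^n|j}\|_{\mathrm{TV}}=0$, for \emph{every} deterministic feedback strategy with inputs in $\mathcal X_D$. Once this is established, every non-secure deterministic ID feedback code from~\cite{labidi2024joint} is already a secure code in the sense of Definition~\ref{def:1}. Hence the non-secure capacity transfers directly, and $H_{\mathrm{priv}}(D)$ plays no role.

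\textbf{Key step: the wiretapper's marginal is input-free.} Fix an identity $i$ and its strategy $f_i$. By Definition~\ref{def:1},
\[
P_{Z^n|i}(z^n)=\sum_{y^n}\prod_{t=1}^n P_{YZ|X}\bigl(y_t,z_t\mid f_i^t(y^{t-1})\bigr).
\]
I will sum out $y_n,y_{n-1},\ldots,y_1$ in backward order.
\begin{itemize}
\item The last factor is the only one depending on $y_n$. Its input $f_i^n(y^{n-1})\in\mathcal X_D$ depends only on $y^{n-1}$. Therefore $\sum_{y_n}P_{YZ|X}(y_n,z_n\mid f_i^n(y^{n-1}))=P_{Z|X}(z_n\mid f_i^n(y^{n-1}))=Q_Z(z_n)$, which no longer depends on $y^{n-1}$.
\item Having removed this dependence, the factor with index $n-1$ is now the only one involving $y_{n-1}$. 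The same argument yields $Q_Z(z_{n-1})$.
\item By induction, $P_{Z^n|i}=Q_Z^{\otimes n}$ for every $i$.
\end{itemize}
Consequently the total variation distance between any two identities is $0\le\lambda$.

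This step is the only delicate point and the main obstacle. Feedback makes $X_t$ depend on $Y^{t-1}$, and $Y_t$ may be correlated with $Z_t$, so one might fear that $Z^n$ inherits a message dependence through $Y^{t-1}$. The backward summation order resolves this: each $y_t$ influences only the factor at time $t$ directly and the later inputs, and those later factors have already been marginalized to constants. Privacy of the feedback is also essential here, since the wiretapper never observes $Y^{t-1}$.

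\textbf{Conclusion.} The converse $C_{\mathrm{SID}}^d\le C_{\mathrm{ID}}^d$ is already part of Theorem~\ref{theorm:secureCapaciy}. For achievability I will split into two cases.
\begin{itemize}
\item If $C^d(D,W_S)>0$, take for each $\lambda$ and all large $n$ the non-secure deterministic feedback ID codes of~\cite{labidi2024joint}. These have inputs restricted to $\mathcal X_D$ and rate approaching $\max_{x\in\mathcal X_D}H(Y|X=x)$. By the key step they also satisfy~\eqref{eq:def_det_security}, so $C_{\mathrm{SID}}^d\ge C_{\mathrm{ID}}^d$.
\item If $C^d(D,W_S)=0$, then $C_{\mathrm{ID}}^d=0$, and the second case of Theorem~\ref{theorm:secureCapaciy} gives equality trivially.
\end{itemize}
In both cases $C_{\mathrm{SID}}^d(D,W_S,V_S)=C_{\mathrm{ID}}^d(D,W_S)$, irrespective of $H_{\mathrm{priv}}(D)$.
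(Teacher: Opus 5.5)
Your proposal is correct and follows essentially the same route as the paper. Both arguments show that $P_{Z^n|i}=Q_Z^{\otimes n}$ for every deterministic feedback strategy with inputs in $\mathcal X_D$, so every non-secure code is automatically secure, and the trivial converse then gives equality. You obtain $Q_Z^{\otimes n}$ by summing out $y_n,\ldots,y_1$ backward, whereas the paper uses the equivalent chain-rule argument, conditioning on $(Y^{t-1},Z^{t-1})$ and averaging out $Y^{t-1}$. Your separate treatment of $C^d(D,W_S)=0$ is harmless but unnecessary, since $C_{\mathrm{SID}}^d\ge 0=C_{\mathrm{ID}}^d$ holds trivially in that case.
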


\begin{proof}
Consider an arbitrary deterministic
identification feedback code for
the admissible main channel.

For identity $i$, the channel input
at time $t$ is given by
\[
X_t=f_i^t(Y^{t-1}).
\]

Since the channel is memoryless
and the state sequence is i.i.d.,
the current wiretapper output,
conditioned on the past observations,
has distribution
\begin{align}
&P_{Z_t|Y^{t-1},Z^{t-1},i}
(z_t|y^{t-1},z^{t-1},i)
\\
&\quad =
P_{Z|X}
(z_t|f_i^t(y^{t-1}))
\\
&\quad = Q_Z(z_t).
\end{align}

This equality holds for every
feedback history having positive
probability.

Consequently, averaging over the
legitimate receiver's past outputs
gives
\[
P_{Z_t|Z^{t-1},i}
(z_t|z^{t-1},i)
=
Q_Z(z_t).
\]

By the chain rule,
\[
P_{Z^n|i}(z^n)
=
\prod_{t=1}^n Q_Z(z_t)
=
Q_Z^{\otimes n}(z^n).
\]

Thus, the wiretapper's observation
has exactly the same distribution
for every identification message.

In particular, for all $i\neq j$,
\[
\left\|
P_{Z^n|i}
-
P_{Z^n|j}
\right\|_{\mathrm{TV}}
=0.
\]

Therefore, every non-secure
deterministic identification
feedback code automatically
satisfies the secrecy constraint.

It follows that
\[
C_{\mathrm{SID}}^d(D,W_S,V_S)
\geq
C_{\mathrm{ID}}^d(D,W_S).
\]

The reverse inequality follows
immediately because every secure
identification code is also a
non-secure identification code.

This proves the claim.
\end{proof}

\begin{remark}
The condition $H_{\mathrm{priv}}(D)>0$
is sufficient but not necessary for
achieving the non-secure deterministic
identification capacity.

To see this, consider the binary
state-dependent wiretap channel
\[
Y=X\oplus S,\qquad Z=S,
\]
where $S\sim\operatorname{Bern}(p)$,
$0<p<1/2$, and the state symbols
are independent across channel uses.

Let the distortion measure be Hamming
distortion and choose $D=0$.
Since $S=X\oplus Y$, the transmitter
can reconstruct the state perfectly
from its input and the feedback output.
Consequently,
\[
d_S^*(x)=0
\]
for every $x\in\{0,1\}$, and hence
$\mathcal X_D=\{0,1\}$.

The state-averaged main channel is
a binary symmetric channel with
crossover probability $p$. Therefore,
\[
C^d(0,W_S)=1-h_2(p)>0,
\]
where $h_2$ denotes the binary entropy
function.

Moreover,
\[
C_{\mathrm{ID}}^d(0,W_S)=h_2(p)>0.
\]

On the other hand, since $Y$ is
completely determined by $(X,Z)$,
\[
H(Y|Z,X=x)=0
\]
for every input symbol $x$, and thus
\[
H_{\mathrm{priv}}(0)=0.
\]

Nevertheless, the wiretapper observes
only the state sequence $Z^n=S^n$.
Since the states are i.i.d. and
independent of the identification
message, every deterministic feedback
encoding strategy induces
\[
P_{Z^n|i}=P_S^{\otimes n}
\]
for every identity $i$.

Consequently, for all $i\neq j$,
\[
\left\|
P_{Z^n|i}-P_{Z^n|j}
\right\|_{\mathrm{TV}}=0.
\]

Thus, every non-secure deterministic
identification feedback code for the
main channel automatically satisfies
the secrecy constraint.

Combining this observation with the
non-secure identification capacity
gives
\[
C_{\mathrm{SID}}^d(0,W_S,V_S)
=
C_{\mathrm{ID}}^d(0,W_S)
=
h_2(p)>0.
\]

This example demonstrates that
$H_{\mathrm{priv}}(D)>0$ is not a
necessary condition for positive
secure deterministic identification
capacity, nor for achieving the
full non-secure identification
capacity.
\end{remark}
    
    \begin{remark}
    Two achievability constructions are possible.
    The OTP approach uses a feedback-extracted
    secret key to encrypt the ID tag and requires
    $C^d(D,W_S)>0$ and $H_{\mathrm{priv}}(D)>0$.
    Alternatively, the extracted private randomness
    can seed a stochastic wiretap encoder when
    $C_S^d(D,W_S,V_S)>0$ and
    $H_{\mathrm{priv}}(D)>0$.
    Both approaches require privacy amplification,
    but the latter has stronger sufficient conditions.
    \end{remark} 

    The converse upper bound follows by removing
the secrecy constraint and applying the
non-secure identification converse with
feedback~\cite{labidi2024joint} to the
state-averaged main channel with input
alphabet restricted to $\mathcal X_D$.
Since every secure identification code
is also a non-secure identification code,
we obtain
\[
C_{\mathrm{SID}}^d(D,W_S,V_S)
\leq C_{\mathrm{ID}}^d(D,W_S).
\]
In particular, if
$C_{\mathrm{ID}}^d(D,W_S)=0$,
the secure identification capacity
must also be zero.
   The case $C^d(D,W_S)=0$ is already covered
by the preceding upper bound, since the
non-secure identification capacity formula
implies $C_{\mathrm{ID}}^d(D,W_S)=0$.

For completeness, we provide a direct
argument for this case. If
$C^d(D,W_S)=0$, all admissible input
symbols induce the same output
distribution, i.e.,
    \[
    W(\cdot|x)=W(\cdot|x'),
    \forall x,x'\in\mathcal{X}_D.
    \]
    Since the channel is memoryless, this implies
    that, for every identity $i$ and feedback history
    $y^{t-1}$,
    \begin{equation}
    P_{Y_t|Y^{t-1},i}(y_t|y^{t-1},i)
    =Q_Y(y_t),
    \end{equation}
    where $Q_Y$ is the common output distribution.
    Consequently,
    $P_{Y^n|i}=Q_Y^{\otimes n}$ for every $i$.
    
Thus, even with feedback, the receiver's
output distribution is independent of the
ID message. For any two distinct identities
$i$ and $j$, the first-kind error constraint
implies
\begin{equation}
Q_Y^{\otimes n}(\mathcal D_j)
\geq 1-\lambda,
\end{equation}
whereas the second-kind error constraint
implies
\begin{equation}
Q_Y^{\otimes n}(\mathcal D_j)
\leq\lambda.
\end{equation}
These inequalities are incompatible for
$\lambda<1/2$. Hence, reliable
identification is impossible, proving the
zero-capacity claim.

\section{Direct Part (Achievability) of Theorem~\ref{theorm:secureCapaciy}}


    Assume $C^d(D,W_S)>0$ and
    $H_{\mathrm{priv}}(D)>0$. Let
    $x^\star\in\arg\max_{x\in\mathcal X_D}
    H(Y|X=x)$.
    
    We emphasize that no local randomization is available at the encoder. Instead, secrecy will be achieved by exploiting the feedback-generated observations at the legitimate receiver, from which a small private component can be extracted and used to protect the message-dependent part of the protocol.
    The key idea is that, although the encoder is deterministic, the noiseless feedback induces a shared random observation at the legitimate receiver, which can be exploited as a common randomness resource. This feedback-generated randomness yields the full ID rate.
    Importantly, the key in Phase~III is generated separately from the common randomness $U_n$ used for identification, using feedback observations satisfying $H(Y|Z,X=x_0)>0$, which enables privacy amplification.
   As such, \(U_n\) need not be secret, and the input symbol \(x^\star\) used for common-randomness generation need not itself provide a positive secret-key rate.
    Instead, secrecy is ensured by extracting a small private component from the legitimate observations that remains unknown to the wiretapper and using it to protect a short message-dependent hash. Since this hash has sublinear length, a private key of vanishing rate is sufficient.

    \begin{remark}
    Note that the feedback-generated common random variable $U_n$
    used for identification is unrelated to the auxiliary random variable $U$
    appearing in the single-letter secrecy-capacity expression.
    \end{remark}
    
    
    Fix $R<C_{\mathrm{ID}}^d(D,W_S)$ and choose
    $\delta>0$ such that
    $R<H(Y|X=x^\star)-3\delta$.
    
    We construct a secure deterministic ID feedback
    code consisting of three successive parts. The first $n$
    channel uses generate the common randomness required for
    ID. A second, message-independent feedback phase
    of length $r_n=o(n)$ generates the secret key $K_n$. Finally,
    the protected hash is transmitted using
    $q_n=o(n)$ additional channel uses. Hence, the total
    blocklength is
    \[
    m=n+r_n+q_n=n+o(n).
    \]
    Since the two additional phases have sublinear length, they do not affect the asymptotic ID rate.

    \paragraph*{Phase I: Deterministic Common-Randomness Generation}
    During the first \(n\) channel uses, independently of the
    ID message, the encoder repeatedly transmits the fixed symbol
    \(x^\star\), where
    \[
    x^\star\in\arg\max_{x\in\mathcal X_D} H(Y|X=x).
    \]
    Hence,
    \[
    X_t=x^\star,\qquad t=1,\ldots,n,
    \]
    and \(Y^n\) is generated according to
    \(P_{Y|X=x^\star}^{\,n}\). Since the feedback is noiseless,
    the same realization \(Y^n\) is available to both the sender
    and the legitimate receiver.
    
    Since $X_t=x^\star$ throughout this phase,
    the output sequence is i.i.d. according to
    $P_{Y|X=x^\star}$. Choose $R_0$ such that
\[
R<R_0<H(Y|X=x^\star).
\]
Then a deterministic mapping of $Y^n$
yields a shared random variable $U_n$
with alphabet size $M_n$
    such that
    \begin{equation}
    \log M_n\geq nR_0-o(n)
    \end{equation}
    and
    \begin{equation}
    \left\|P_{U_n}
    -\operatorname{Unif}(\mathcal U_n)
    \right\|_{\mathrm{TV}}
    \leq\alpha_n,\qquad \alpha_n\to0.
    \end{equation}
    Here, $M_n=|\mathcal U_n|$, and
$\operatorname{Unif}(\mathcal U_n)$ denotes
the uniform distribution on $\mathcal U_n$.
    The mapping is fixed in advance and requires
    no local encoder randomization. This follows from the deterministic
    feedback-based common-randomness construction
    in~\cite{longversion}.

    Since \(x^\star\in\mathcal X_D\),
    \[
    d_S^*(x^\star)\le D,
    \]
    and thus the sensing distortion constraint is satisfied during
    this phase.

\paragraph*{Phase II: Identification from Common Randomness}

By construction, the long feedback phase is
message-independent and serves only to generate
the shared random variable $U_n$ and to satisfy
the sensing constraint.

Choose
\[
\ell_n=\lceil\sqrt n\rceil,
\qquad L_n=2^{\ell_n}.
\]
In particular, $\ell_n\to\infty$ and
$\ell_n=o(n)$.

We construct deterministic mappings
\begin{equation}
\tilde b_n(i,\cdot):
\mathcal U_n\to\{0,1\}^{\ell_n},
\qquad i=1,\ldots,N_n,
\end{equation}
using the probabilistic method.

For each identity $i$, independently
assign to every $u\in\mathcal U_n$
a uniformly distributed label from
$\{0,1\}^{\ell_n}$.

For any two distinct identities
$i\neq j$, the collision indicators
\[
\mathbf 1\{
\tilde b_n(i,u)=\tilde b_n(j,u)
\},
\qquad u\in\mathcal U_n,
\]
are independent Bernoulli random
variables with mean $1/L_n$.

Hence, Hoeffding's inequality gives
\begin{equation}
\Pr\left\{
\frac{1}{M_n}
\sum_{u\in\mathcal U_n}
\mathbf 1\{
\tilde b_n(i,u)=\tilde b_n(j,u)
\}
>
\frac{1}{L_n}+\frac{1}{\sqrt{L_n}}
\right\}
\leq
\exp\left(-\frac{2M_n}{L_n}\right).
\end{equation}

Choose
\[
N_n=
\left\lfloor
\exp\left(\frac{M_n}{4L_n}\right)
\right\rfloor.
\]
By the union bound, the probability
that at least one pair violates
the preceding collision bound
is at most
\[
N_n^2\exp\left(-\frac{2M_n}{L_n}\right)
\leq
\exp\left(-\frac{3M_n}{2L_n}\right),
\]
which tends to zero.

Consequently, for all sufficiently
large $n$, there exists a fixed
deterministic family of mappings
satisfying, for every $i\neq j$,
\[
\frac{1}{M_n}
\sum_{u\in\mathcal U_n}
\mathbf 1\{
\tilde b_n(i,u)=\tilde b_n(j,u)
\}
\leq
\delta_n,
\]
where
\[
\delta_n=
\frac{1}{L_n}+\frac{1}{\sqrt{L_n}}
\longrightarrow 0.
\]

Since
\[
\left\|
P_{U_n}-\operatorname{Unif}(\mathcal U_n)
\right\|_{\mathrm{TV}}
\leq\alpha_n,
\]
we obtain
\begin{equation}
\Pr\left\{
\tilde b_n(i,U_n)=\tilde b_n(j,U_n)
\right\}
\leq\delta_n+\alpha_n.
\end{equation}

Furthermore, since
$\log M_n\geq nR_0-o(n)$
and $\ell_n=o(n)$, our choice of
$N_n$ implies
\[
\liminf_{n\to\infty}
\frac{1}{n}\log\log N_n
\geq R_0.
\]
As $R_0$ can be chosen arbitrarily
close to $H(Y|X=x^\star)$,
every identification rate
\[
R<H(Y|X=x^\star)
\]
is achievable.

The receiver tests a candidate
identity $j$ by comparing the
recovered tag with
$\tilde b_n(j,U_n)$.

\paragraph*{Phase III: Feedback-Based Secret-Key Generation and
Hash Protection}
Let
\[
\tilde b_n(i,U_n)\in\{0,1\}^{\ell_n}
\]
denote the sublinear message-dependent hash obtained in
Phase~II.
Consider an additional message-independent feedback phase of
length \(r_n=o(n)\), where the transmitter repeatedly sends a
fixed input symbol \(x_0\in\mathcal X_D\) satisfying
\[
H(Y|Z,X=x_0)>0.
\]

Such an input symbol $x_0$ exists since
$H_{\mathrm{priv}}(D)>0$. Let
$h_0:=H(Y|Z,X=x_0)>0$.

For the tag length $\ell_n$ chosen in
Phase II, choose $r_n\to\infty$ with
$r_n=o(n)$ such that, for some
$0<\epsilon<h_0$,
\begin{equation}
\ell_n<r_n(h_0-\epsilon).
\end{equation}

The legitimate receiver observes \(\bar{Y}^{r_n}\), and noiseless
feedback makes the same sequence available to the transmitter,
whereas the wiretapper observes \(\bar{Z}^{r_n}\).


Since the input symbol $x_0$ is fixed,
the pairs $(\bar Y_t,\bar Z_t)$ are
i.i.d. according to $P_{YZ|X=x_0}$.

Let $\epsilon>0$ be the constant chosen
above such that
\[
\ell_n<r_n(h_0-\epsilon).
\]
Define the event
\[
\mathcal G_n
=
\left\{
-\log
P_{\bar Y^{r_n}|\bar Z^{r_n}}
(\bar Y^{r_n}|\bar Z^{r_n})
\geq r_n(h_0-\epsilon/2)
\right\}.
\]
By the conditional asymptotic
equipartition property,
\[
\beta_n:=\Pr(\mathcal G_n^c)
\longrightarrow 0.
\]

Consider a two-universal family of
hash functions from
$\mathcal Y^{r_n}$ to
$\{0,1\}^{\ell_n}$.

By privacy amplification
~\cite{bennett1995generalized},
averaging over this family yields
\begin{equation}
\begin{aligned}
&\mathbb E_F
\left[
\left\|
P_{F(\bar Y^{r_n}),\bar Z^{r_n}|F}
-
U_{\ell_n}P_{\bar Z^{r_n}}
\right\|_{\mathrm{TV}}
\right]
\\
&\quad\leq
\beta_n+
\frac12
2^{-\frac12[
r_n(h_0-\epsilon/2)-\ell_n]}.
\end{aligned}
\end{equation}

Since
$\ell_n<r_n(h_0-\epsilon)$,
we obtain
\[
\beta_n+
\frac12
2^{-\frac12[
r_n(h_0-\epsilon/2)-\ell_n]}
\leq
\beta_n+\frac12 2^{-r_n\epsilon/4}
\longrightarrow 0.
\]

Consequently, there exists a fixed
deterministic mapping
\[
g_n:\mathcal Y^{r_n}
\to\{0,1\}^{\ell_n}
\]
such that
\[
K_n=g_n(\bar Y^{r_n})
\]
satisfies
\begin{equation}
\left\|
P_{K_n,\bar Z^{r_n}}
-
U_{\ell_n}P_{\bar Z^{r_n}}
\right\|_{\mathrm{TV}}
\leq\eta_n,
\end{equation}
where
\[
\eta_n
:=
\beta_n+\frac12 2^{-r_n\epsilon/4}
\longrightarrow 0.
\]

The mapping $g_n$ is selected during
code design and is fixed and publicly
known before transmission.
Therefore, its implementation requires
no local randomization at the encoder.

Moreover, since the common-randomness
generation phase and the key-generation
phase use fixed input symbols over
disjoint channel uses, the memoryless
channel property implies that
$(K_n,\bar Z^{r_n})$ is independent of
$(U_n,Z^n)$.
Since both legitimate terminals observe the same
\(\bar{Y}^{r_n}\), they compute the same key \(K_n\). Since
\(g_n\) is deterministic, no local encoder randomization is required.
The identification hash \(\tilde b_n\) is now protected using a
one-time pad with the feedback-extracted key \(K_n\). Specifically,
the sender computes
\[
\hat b_n
=
\tilde b_n(i,U_n)\oplus K_n.
\]
The protected hash $\hat b_n$ is transmitted
using a deterministic transmission code
over the state-averaged main channel with
input alphabet $\mathcal X_D$.

Since $C^d(D,W_S)>0$, fix any rate
\[
0<R_{\mathrm{tag}}<C^d(D,W_S)
\]
and choose
\[
q_n=
\left\lceil
\frac{\ell_n}{R_{\mathrm{tag}}}
\right\rceil.
\]
Then
\[
\frac{\ell_n}{q_n}
\leq R_{\mathrm{tag}}
<C^d(D,W_S).
\]

By the channel coding theorem for finite
discrete memoryless channels, there exists
a sequence of deterministic transmission
codes with $2^{\ell_n}$ messages and
maximal decoding error probability
\[
\epsilon_n
:=
\max_{b\in\{0,1\}^{\ell_n}}
\Pr\{B_n^{\mathrm{dec}}\neq b
\mid \hat b_n=b\}
\longrightarrow 0.
\]

All codewords belong to
$\mathcal X_D^{q_n}$ and therefore
satisfy the symbol-wise sensing constraint.

Since $\ell_n\to\infty$ and
$\ell_n=o(n)$, we have
$q_n\to\infty$ and $q_n=o(n)$.
Thus, the reliable transmission of the
encrypted tag requires only a sublinear
number of additional channel uses.
The legitimate receiver first computes
$K_n$ from $\bar Y^{r_n}$.

Let $B_n^{\mathrm{dec}}$ denote the
decoder's estimate of the transmitted
encrypted tag $\hat b_n$.

The receiver then computes
\[
\widehat T_n
=
B_n^{\mathrm{dec}}\oplus K_n.
\]

For each candidate identity $j$, the
receiver performs the deterministic
identification test
\[
\widehat T_n
\stackrel{?}{=}
\tilde b_n(j,U_n).
\]

The receiver accepts identity $j$
if the equality holds and rejects it
otherwise.
    
\paragraph*{Reliability Analysis}
The reliability analysis involves the following
contributions:
\begin{enumerate}
    \item The approximation error $\alpha_n$
    associated with the common-randomness
    distribution in Phase I;

    \item The collision probability of the
    deterministic identification coloring
    construction based on $U_n$ and
    $\tilde b_n$;

    \item The decoding error of the reliable
    code used to transmit the protected
    hash $\hat b_n$.
\end{enumerate}

    
Let $\alpha_n$ denote the common-randomness
approximation error from Phase I,
$\delta_n$ the collision bound of the
coloring construction, and $\epsilon_n$
the maximal decoding error of the reliable
tag-transmission code.

For the resulting identification code
of total blocklength
$m=n+r_n+q_n$, define the maximal
error probabilities of the first and
second kind by
\begin{align}
\lambda_{1,n}
&:=
\max_{1\le i\le N_n}
P_{Y^m|i}(\mathcal D_i^c),
\\
\lambda_{2,n}
&:=
\max_{\substack{1\le i,j\le N_n\\i\neq j}}
P_{Y^m|i}(\mathcal D_j),
\end{align}
where $\mathcal D_i\subseteq\mathcal Y^m$
denotes the decoding region associated
with identity $i$.

These definitions are consistent with
the reliability requirements in
Definition~\ref{def:1}.

Since both legitimate terminals obtain the
same feedback sequence, they agree on
$U_n$ and $K_n$ without error. Therefore,
the first-kind ID error satisfies
\begin{equation}
\lambda_{1,n}\leq\epsilon_n,
\end{equation}
while the second-kind error is bounded by
\begin{equation}
\lambda_{2,n}
\leq\delta_n+\alpha_n+\epsilon_n.
\end{equation}
As $\alpha_n,\delta_n,\epsilon_n\to0$,
both error probabilities vanish.




\paragraph*{Security Analysis}

Both the common-randomness and key-generation
phases are independent of the ID message.
Since these phases use fixed channel inputs
over disjoint channel uses, the pair
$(K_n,\bar Z^{r_n})$ is independent of
$(U_n,Z^n)$.

Let $K_n'$ be an independent uniform key
on $\{0,1\}^{\ell_n}$. For identity $i$,
define the ideal encrypted tag
\begin{equation}
\hat b_n'
=
\tilde b_n(i,U_n)\oplus K_n'.
\end{equation}
Conditioned on $U_n$ and the wiretapper's
observations from both preliminary phases,
$\hat b_n'$ is uniformly distributed over
$\{0,1\}^{\ell_n}$, independently of
the identity $i$.

Since $(K_n,\bar Z^{r_n})$ is independent
of $(U_n,Z^n)$, the privacy-amplification
bound implies
\begin{equation}
\begin{aligned}
&\bigl\|
P_{U_n,Z^n,\bar Z^{r_n},K_n}\\
&\quad -
P_{U_n,Z^n,\bar Z^{r_n}}U_{\ell_n}
\bigr\|_{\mathrm{TV}}
\leq \eta_n.
\end{aligned}
\end{equation}
Thus, replacing the actual key by an
independent uniform key changes the
joint distribution of all preliminary
observations and the key by at most
$\eta_n$ in total variation.

The encrypted tag is subsequently transmitted
using the same deterministic channel code
in both constructions. Since total
variation distance cannot increase under
a common channel, the wiretapper's complete
observation under identity $i$ differs
from its ideal counterpart by at most
$\eta_n$.

In the ideal construction, the distribution
of the wiretapper's complete observation
does not depend on the identity.
Consequently, for every $i\neq j$,
the triangle inequality yields
\begin{equation}
\left\|
P_{Z^m|i}-P_{Z^m|j}
\right\|_{\mathrm{TV}}
\leq 2\eta_n\longrightarrow 0.
\end{equation}
Thus, the secrecy requirement is satisfied.

\paragraph*{Distortion Analysis}
Let
\[
d_t:=\mathbb E[d(S_t,\hat S_t)]
\]
denote the expected distortion at time $t$.

During Phase~I, the encoder transmits
\(x^\star\in\mathcal X_D\).
Hence,
\[
d_t=d_S^*(x^\star)\le D
\]
for every channel use in this phase. During the secret-key
generation phase, the encoder transmits \(x_0\in\mathcal X_D\),
and therefore
\[
d_t=d_S^*(x_0)\le D.
\]
The protected hash is transmitted using codewords in
\(\mathcal X_D^{q_n}\), so \(d_t\le D\) also in this phase.
Hence, \(d_t\le D\) for all \(t=1,\ldots,m\), and the
per-symbol sensing constraint is satisfied throughout the
entire scheme.
\paragraph*{Rate Computation and Conclusion}
Since $r_n=o(n)$ and $q_n=o(n)$, the total
blocklength is $m=n+r_n+q_n=n+o(n)$.
Hence, $n/m\to1$. By the coloring construction
in Phase II, every rate
$R<C_{\mathrm{ID}}^d(D,W_S)$ is achievable, i.e.,
\begin{equation}
\liminf_{n\to\infty}
\frac{1}{m}\log\log N_n\geq R.
\end{equation}
Since $R<C_{\mathrm{ID}}^d(D,W_S)$ is arbitrary,
\begin{equation}
C_{\mathrm{SID}}^d(D,W_S,V_S)
\geq C_{\mathrm{ID}}^d(D,W_S)
\end{equation}
whenever $C^d(D,W_S)>0$ and
$H_{\mathrm{priv}}(D)>0$.
Together with the converse upper bound,
\begin{equation}
C_{\mathrm{SID}}^d(D,W_S,V_S)
=C_{\mathrm{ID}}^d(D,W_S),
\end{equation}
which completes the achievability proof of
Theorem~\ref{theorm:secureCapaciy}.

\section{Alternative Achievability via Stochastic Wiretap Encoding}

We present an alternative achievability
construction based on stochastic wiretap
encoding. Unlike the one-time-pad approach
in the previous section, the private randomness
extracted from feedback is used to simulate
the randomization required by a stochastic
wiretap encoder. This construction applies
under the stronger condition of positive
admissible wiretap secrecy capacity and
provides a second route to achieving the
non-secure identification capacity without
local encoder randomness.

Assume $C_S^d(D,W_S,V_S)>0$ and
$H_{\mathrm{priv}}(D)>0$.

Recall that the classical wiretap secrecy
capacity cannot exceed the ordinary
transmission capacity of the main channel.
Indeed, for every auxiliary random variable
$U$ satisfying $U-X-(Y,Z)$,
\[
I(U;Y)-I(U;Z)
\leq I(U;Y)
\leq I(X;Y).
\]
Consequently,
\[
C_S^d(D,W_S,V_S)
\leq C^d(D,W_S).
\]
Hence, the assumption
$C_S^d(D,W_S,V_S)>0$
also guarantees
$C^d(D,W_S)>0$.

The alternative construction therefore
operates under stronger assumptions
than the one-time-pad construction
presented in the previous section.

Fix any
$R<C_{\mathrm{ID}}^d(D,W_S)$.
As in Phase I of the OTP construction,
transmit a fixed symbol
$x^\star\in\mathcal X_D$ maximizing
$H(Y|X=x)$ for $n$ channel uses.
Noiseless feedback provides both legitimate
terminals with common randomness $U_n$
at any rate below $C_{\mathrm{ID}}^d(D,W_S)$.
This phase is independent of the ID message,
and $U_n$ need not be secret.

Fix any rate $R_s$ satisfying
\[
0<R_s<C_S^d(D,W_S,V_S).
\]

For the state-averaged discrete memoryless
wiretap channel with input alphabet
restricted to $\mathcal X_D$, the
semantic-security wiretap coding theorem
\cite{yang2019wiretap} guarantees the
existence of stochastic wiretap codes
with blocklength $q_n\to\infty$,
$q_n=o(n)$, and tag alphabet $\mathcal T_n$
satisfying
\[
\log|\mathcal T_n|
\geq q_nR_s-o(q_n).
\]

These codes can be chosen to satisfy
both maximal decoding reliability
and semantic security. More precisely,
there exist sequences
$\rho_n\to0$ and $\xi_n\to0$
such that
\begin{equation}
\max_{t\in\mathcal T_n}
\Pr\{\hat T\neq t\mid T=t\}
\leq\rho_n
\end{equation}
and
\begin{equation}
\sup_{P_T}I(T;Z^{q_n})
\leq\xi_n.
\end{equation}

Here, $P_{Z^{q_n}|t}$ denotes the
wiretapper's output distribution
induced by the stochastic encoder
when tag $t$ is transmitted, averaged
over the encoder's independent
randomization.

The semantic-security condition
implies the uniform total-variation
bound needed in our construction.
To see this, fix an arbitrary
$t_0\in\mathcal T_n$ and define
\[
Q_n:=P_{Z^{q_n}|t_0}.
\]

For any $t\neq t_0$, consider a
binary random variable $T$ uniformly
distributed over $\{t,t_0\}$.
By semantic security,
\[
I(T;Z^{q_n})\leq\xi_n.
\]

Applying Pinsker's inequality to
the two conditional output
distributions gives
\begin{equation}
\left\|
P_{Z^{q_n}|t}
-
P_{Z^{q_n}|t_0}
\right\|_{\mathrm{TV}}
\leq
\sqrt{2(\ln 2)\xi_n}.
\end{equation}

Consequently,
\begin{equation}
\max_{t\in\mathcal T_n}
\left\|
P_{Z^{q_n}|t}-Q_n
\right\|_{\mathrm{TV}}
\leq
\sqrt{2(\ln 2)\xi_n}
\longrightarrow 0.
\end{equation}

Thus, by defining
\[
\epsilon_n
:=
\max\left\{
\rho_n,\sqrt{2(\ln 2)\xi_n}
\right\},
\]
both the maximal decoding error
and the uniform total-variation
secrecy bound are upper bounded
by $\epsilon_n\to0$.

Since $R_s>0$, we also have
$|\mathcal T_n|\to\infty$,
which permits the identification
coloring construction used below.
Since $\mathcal X_D$ is finite, the
stochastic encoder for each tag $t$
induces a probability distribution
$P_{X^{q_n}|t}$ on $\mathcal X_D^{q_n}$.

Let
\[
k_n=
\left\lceil
q_n\log_2|\mathcal X_D|
+2\log_2 q_n
\right\rceil.
\]
For every tag $t$, we can approximate
$P_{X^{q_n}|t}$ by a distribution
$\widetilde P_{X^{q_n}|t}$
whose probabilities are integer
multiples of $2^{-k_n}$.

Indeed, rounding the probabilities
to multiples of $2^{-k_n}$ and
redistributing the remaining mass gives
\[
\left\|
P_{X^{q_n}|t}
-\widetilde P_{X^{q_n}|t}
\right\|_{\mathrm{TV}}
\leq
\frac{|\mathcal X_D|^{q_n}}{2^{k_n}}
\leq q_n^{-2}.
\]

Consequently, there exists a
deterministic mapping
\[
\phi_n:
\mathcal T_n\times\{0,1\}^{k_n}
\to\mathcal X_D^{q_n}
\]
such that, for an independent uniform
seed $K_n'$ on $\{0,1\}^{k_n}$,
the conditional distribution of
$\phi_n(t,K_n')$ equals
$\widetilde P_{X^{q_n}|t}$.

By the data-processing inequality for
total variation, this approximation
changes both the decoding-error
probability and the wiretapper's
output distribution by at most
$q_n^{-2}$, uniformly over all tags.

Thus, after replacing $\epsilon_n$
by $\epsilon_n+q_n^{-2}$ and denoting
the resulting vanishing error bound
again by $\epsilon_n$, the maximal
reliability and uniform secrecy
guarantees remain valid.

In particular, $k_n=O(q_n)$.

In the proposed feedback construction,
this seed is not generated locally.
Instead, it is extracted from the
legitimate channel outputs in a separate
message-independent phase. The
privacy-amplification mapping is selected
during code design and fixed before
transmission.

Choose $x_0\in\mathcal X_D$ satisfying
$h_0:=H(Y|Z,X=x_0)>0$.
Transmit $x_0$ during a separate
message-independent phase of length
$r_n\to\infty$, with $r_n=o(n)$,
chosen such that, for some
$0<\delta<h_0$,
\begin{equation}
k_n<r_n(h_0-\delta).
\end{equation}
Since
\[
\lim_{n\to\infty}\frac{k_n}{q_n}
=\log_2|\mathcal X_D|,
\]
we may fix a constant
\[
c>\frac{\log_2|\mathcal X_D|}{h_0-\delta}
\]
and choose
\[
r_n=\lceil c q_n\rceil.
\]
Then, for all sufficiently large $n$,
\[
k_n<r_n(h_0-\delta).
\]
Since $q_n\to\infty$ and $q_n=o(n)$,
we also have $r_n\to\infty$ and
$r_n=o(n)$.

Consequently, the total blocklength
satisfies
\[
m=n+r_n+q_n=n+o(n).
\]
Thus,
\[
\liminf_{n\to\infty}
\frac{1}{m}\log\log N_n
\geq R,
\]
and the additional phases do not reduce
the asymptotically achievable ID rate.

By privacy amplification
\cite{bennett1995generalized},
there exists a fixed deterministic mapping
$g_n:\mathcal Y^{r_n}\to\{0,1\}^{k_n}$
such that $K_n=g_n(\bar Y^{r_n})$
satisfies
\begin{equation}
\left\|
P_{K_n,\bar Z^{r_n}}
-U_{k_n}P_{\bar Z^{r_n}}
\right\|_{\mathrm{TV}}
\leq\eta_n,\qquad\eta_n\to0.
\end{equation}
Since this phase uses fixed inputs and
the channel is memoryless, the extracted
seed and its corresponding wiretapper
observations are independent of Phase I.

The mapping $g_n$ is selected and fixed
before transmission. Consequently,
$K_n$ is generated solely from channel
feedback, without local randomization
at the encoder.

We now construct the identification
coloring functions.

Fix $R_0$ such that
\[
R<R_0<H(Y|X=x^\star).
\]
As in Phase I, the common random variable
$U_n$ has alphabet size $M_n$ satisfying
\[
\log M_n\geq nR_0-o(n)
\]
and
\[
\left\|
P_{U_n}-\operatorname{Unif}(\mathcal U_n)
\right\|_{\mathrm{TV}}
\leq\alpha_n,
\qquad \alpha_n\to0.
\]

Let $\mathcal T_n^{\mathrm{original}}$
denote the tag alphabet of the
stochastic wiretap code constructed above.

Since
\[
\log|\mathcal T_n^{\mathrm{original}}|
\geq q_nR_s-o(q_n),
\]
we may choose a subset
$\mathcal T_n\subseteq
\mathcal T_n^{\mathrm{original}}$
of size
\[
L_n:=|\mathcal T_n|
=\left\lfloor 2^{q_nR_s/2}\right\rfloor
\]
for all sufficiently large $n$.

Restricting the wiretap code to this
subset preserves its maximal reliability
and uniform secrecy guarantees.
In particular,
\[
L_n\to\infty,
\qquad
\log L_n=o(n).
\]

For every identity $i$ and every
$u\in\mathcal U_n$, independently
choose $h_i(u)$ uniformly from
$\mathcal T_n$.

For two distinct identities $i\neq j$,
the random variables
\[
\mathbf 1\{h_i(u)=h_j(u)\},
\qquad u\in\mathcal U_n,
\]
are independent Bernoulli variables
with mean $1/L_n$.

By Hoeffding's inequality,
\begin{equation}
\Pr\left\{
\frac{1}{M_n}
\sum_{u\in\mathcal U_n}
\mathbf 1\{h_i(u)=h_j(u)\}
>
\frac{1}{L_n}+\frac{1}{\sqrt{L_n}}
\right\}
\leq
\exp\left(-\frac{2M_n}{L_n}\right).
\end{equation}

Choose
\[
N_n=
\left\lfloor
\exp\left(\frac{M_n}{4L_n}\right)
\right\rfloor.
\]

By the union bound, the probability
that any pair violates the collision
bound is at most
\[
N_n^2
\exp\left(-\frac{2M_n}{L_n}\right)
\leq
\exp\left(-\frac{3M_n}{2L_n}\right),
\]
which tends to zero.

Consequently, there exist fixed
deterministic mappings
\[
h_i:\mathcal U_n\to\mathcal T_n,
\qquad i=1,\ldots,N_n,
\]
such that, for every $i\neq j$,
\[
\Pr\{h_i(U_n)=h_j(U_n)\}
\leq\delta_n+\alpha_n,
\]
where
\[
\delta_n=
\frac{1}{L_n}+\frac{1}{\sqrt{L_n}}
\longrightarrow0.
\]

Finally,
\[
\log\log N_n
=
\log M_n-\log L_n+O(1).
\]
Since $\log L_n=o(n)$, we obtain
\[
\liminf_{n\to\infty}
\frac{1}{n}\log\log N_n
\geq R_0>R.
\]

Thus, the construction supports
doubly exponentially many identities
with vanishing pairwise collision
probability.

For identity $i$, the transmitter sends
the tag $h_i(U_n)$ using the stochastic
wiretap encoder with the extracted seed
$K_n$. The receiver decodes the tag and
accepts a candidate identity $j$ if the
decoded tag equals $h_j(U_n)$.

\paragraph*{Reliability Analysis}

For each transmitted identity $i$, let
$\widehat T_n$ denote the decoded tag
and define the tag-decoding error event
\[
\mathcal E_i
=
\{\widehat T_n\neq h_i(U_n)\}.
\]

Consider first the ideal construction
in which the stochastic wiretap encoder
uses an independent uniform seed.

By the maximal decoding-error guarantee
of the wiretap code, we have
\[
P_i^{\mathrm{ideal}}(\mathcal E_i)
\leq\epsilon_n
\]
uniformly over all identities $i$.

In the actual construction, the
independent uniform seed is replaced
by the feedback-extracted seed $K_n$.

Since
\[
\left\|
P_{K_n,\bar Z^{r_n}}
-
U_{k_n}P_{\bar Z^{r_n}}
\right\|_{\mathrm{TV}}
\leq\eta_n
\]
and the seed-generation phase is
independent of Phase I, the
data-processing inequality implies
\[
\left|
P_i^{\mathrm{actual}}(\mathcal E_i)
-
P_i^{\mathrm{ideal}}(\mathcal E_i)
\right|
\leq\eta_n.
\]

Consequently,
\[
P_i^{\mathrm{actual}}(\mathcal E_i)
\leq\epsilon_n+\eta_n
\]
uniformly over all identities $i$.

Since the receiver accepts identity $i$
whenever the transmitted tag is decoded
correctly, the maximal first-kind
identification error satisfies
\begin{equation}
\lambda_{1,n}
\leq\epsilon_n+\eta_n.
\end{equation}

For the second-kind error, consider
two distinct identities $i\neq j$.
The receiver can incorrectly accept
identity $j$ only if either the
transmitted tag is decoded incorrectly
or the two coloring functions collide.

Therefore,
\begin{align}
&P_i^{\mathrm{actual}}
\{\widehat T_n=h_j(U_n)\}
\\
&\quad\leq
P_i^{\mathrm{actual}}(\mathcal E_i)
+
\Pr\{h_i(U_n)=h_j(U_n)\}
\\
&\quad\leq
\epsilon_n+\eta_n+\delta_n+\alpha_n.
\end{align}

Taking the maximum over all distinct
identity pairs gives
\begin{equation}
\lambda_{2,n}
\leq
\delta_n+\alpha_n+\epsilon_n+\eta_n.
\end{equation}

Since
$\alpha_n,\delta_n,\epsilon_n,\eta_n
\to0$,
both maximal identification error
probabilities vanish.

For secrecy, let
\[
V_n=(Z^n,\bar Z^{r_n})
\]
denote the wiretapper's observations
during the two preliminary phases, and
let $Z_{\mathrm f}^{q_n}$ denote its
observation during the final phase.

Since both preliminary phases use fixed
channel inputs, the distribution $P_{V_n}$
is independent of the identification
message. Moreover, the memoryless channel
property implies that
$(K_n,\bar Z^{r_n})$ is independent of
$(U_n,Z^n)$.

Consider first an ideal construction in
which the stochastic wiretap encoder uses
an independent uniform $k_n$-bit seed.
Denote the corresponding distribution
by $P^{\mathrm{ideal}}$.

For any identity $i$, the transmitted tag
is $h_i(U_n)$. Although this tag may be
correlated with $Z^n$, the wiretap code
satisfies the uniform secrecy bound
\[
\max_{t\in\mathcal T_n}
\left\|
P_{Z^{q_n}|t}-Q_n
\right\|_{\mathrm{TV}}
\leq\epsilon_n.
\]
Hence, conditioning on $U_n$ and $V_n$
and using the convexity of total variation,
we obtain
\begin{equation}
\left\|
P^{\mathrm{ideal}}_{V_n,Z_{\mathrm f}^{q_n}|i}
-
P_{V_n}Q_n
\right\|_{\mathrm{TV}}
\leq\epsilon_n.
\end{equation}

Next, consider the actual construction
using the feedback-extracted seed $K_n$.

Since the two preliminary phases use
fixed channel inputs over disjoint
channel uses, the memoryless channel
property implies
\[
P_{U_n,Z^n,\bar Z^{r_n},K_n}
=
P_{U_n,Z^n}
P_{\bar Z^{r_n},K_n}.
\]

Moreover, the reference distribution
with an independent uniform seed
factorizes as
\[
P_{U_n,V_n}U_{k_n}
=
P_{U_n,Z^n}
P_{\bar Z^{r_n}}U_{k_n}.
\]

Consequently, by the invariance of
total variation under tensoring with
a common probability distribution,
\begin{align}
&\left\|
P_{U_n,V_n,K_n}
-
P_{U_n,V_n}U_{k_n}
\right\|_{\mathrm{TV}}
\\
&\quad =
\left\|
P_{\bar Z^{r_n},K_n}
-
P_{\bar Z^{r_n}}U_{k_n}
\right\|_{\mathrm{TV}}
\\
&\quad \leq \eta_n.
\end{align}

Thus, the extracted seed is
asymptotically indistinguishable
from an independent uniform seed,
even when the wiretapper's
observations from both preliminary
phases are taken into account.
Since the final-phase encoder and channel
are identical in both constructions,
the data-processing inequality yields
\begin{equation}
\left\|
P^{\mathrm{actual}}_{V_n,Z_{\mathrm f}^{q_n}|i}
-
P^{\mathrm{ideal}}_{V_n,Z_{\mathrm f}^{q_n}|i}
\right\|_{\mathrm{TV}}
\leq\eta_n.
\end{equation}

Consequently, for every identity $i$,
\begin{equation}
\left\|
P^{\mathrm{actual}}_{V_n,Z_{\mathrm f}^{q_n}|i}
-
P_{V_n}Q_n
\right\|_{\mathrm{TV}}
\leq\epsilon_n+\eta_n.
\end{equation}

Since the reference distribution
$P_{V_n}Q_n$ is independent of the
identity, the triangle inequality gives
\begin{equation}
\max_{i\neq j}
\left\|
P_{Z^m|i}-P_{Z^m|j}
\right\|_{\mathrm{TV}}
\leq 2(\epsilon_n+\eta_n)
\longrightarrow 0,
\end{equation}
where $m=n+r_n+q_n$.

Thus, the pairwise total-variation
secrecy requirement is satisfied.
All transmitted symbols belong to
$\mathcal X_D$, so the sensing constraint
is satisfied. Moreover, $r_n+q_n=o(n)$,
and hence $m=n+o(n)$. Thus, every
$R<C_{\mathrm{ID}}^d(D,W_S)$ is achievable
under the stronger conditions
$C_S^d(D,W_S,V_S)>0$ and
$H_{\mathrm{priv}}(D)>0$.

\section{Conclusion and Future Work}

We studied secure deterministic identification
with noiseless feedback over state-dependent
wiretap channels under a sensing distortion
constraint. The encoder is fully deterministic
and uses no local randomization; its actions
depend only on the identification message
and previously observed feedback.

We showed that the secure deterministic
identification capacity equals its non-secure
counterpart whenever the admissible main
channel has positive ordinary transmission
capacity and the private feedback entropy
is positive. Conversely, the secure
identification capacity is zero if either
the admissible main channel has zero
ordinary transmission capacity or the
non-secure identification capacity is zero.

Our primary achievability construction uses
feedback to generate common randomness for
identification and a sublinear secret key
for one-time-pad protection of the
identification tag. We also presented an
alternative construction based on stochastic
wiretap encoding under the conditions of
positive admissible wiretap secrecy capacity
and positive private feedback entropy. In this alternative, feedback-derived
private randomness replaces the local
randomness normally required by the
stochastic encoder.

The case of positive ordinary transmission
capacity and positive non-secure identification
capacity, but zero private feedback entropy,
is not completely characterized by our results.

Nevertheless, we have demonstrated that positive
private feedback entropy is not a necessary
condition for achieving the full non-secure
identification capacity. In particular, our
binary state-dependent wiretap channel example
shows that secure deterministic identification
with feedback can achieve a strictly positive
rate even when the private feedback entropy
vanishes.

This observation suggests that the existence
of private randomness in the feedback output
is not the only mechanism that can guarantee
secure identification. A complete characterization
of the secure deterministic identification
capacity in the absence of private feedback
entropy remains an interesting direction for
future research.

\section*{Acknowledgment}
The authors were supported in part by the German Federal Ministry of Research, Technology and Space (BMFTR) through the Transfer Hub 6G-life under Grants 16KIS2413K, 16KIS2414, and 16KIS2415. Holger Boche was additionally supported in part by the BMFTR within the national initiative on Molecular Communication (IoBNT) under Grant 16KIS1988. Christian Deppe and Vida Gholamian were further supported in part by the BMFTR within the national initiative QSTARS under Grant 16KIS2602. Christian Deppe, Vida Gholamian, and Yaning Zhao were also supported in part by the German Research Foundation (DFG) under Project DE1915/2-1. Holger Boche and Rafael F. Schaefer were further supported in part by the DFG as part of Germany’s Excellence Strategy--Excellence Cluster (EXC) 2050/2 - Project ID 390696704 - Cluster of Excellence \emph{``Centre for Tactile Internet with Human-in-the-Loop'' (CeTI)}, and by the DFG within the Priority Program SPP 2378 \emph{``Resilient Worlds''}, Project ID 565653523.

\bibliographystyle{IEEEtranS}
\bibliography{references}

\end{document}